\definecolor{tucgreen}{RGB}{0,140,79}
\newcommand{\new}[1]{{\color{black}{#1}}}
\newcommand{\nmnew}[1]{{\color{black}{#1}}}
\newcommand{\WJnew}[1]{{\color{black}#1}}
\newcommand{\may}{\ensuremath{\mathit{may}}\xspace}
\newcommand{\must}{\ensuremath{\mathit{must}}\xspace}
\newcommand{\uval}{\mathbf{u}}   
\newtheorem{theorem}{Theorem}[section]
\newtheorem{corollary}[theorem]{Corollary}
\newtheorem{definition}[theorem]{Definition}
\newtheorem{remark}[theorem]{Remark}
\newcommand{\stratDiamond}[1]{\ensuremath{\langle\!\langle#1\rangle\!\rangle}}
\newcommand{\stratBox}[1]{\ensuremath{[\![#1]\!]}}
\newcommand{\atls}{ATL$^*$}
\newcommand{\ctls}{CTL$^*$}
\newcommand{\SL}{SL\xspace}
\newcommand{\APSet}{\textit{AP}}
\newcommand{\AgSet}{\textit{Ag}}
\newcommand{\AcSet}{\Act}
\newcommand{\StSet}{\States}
\newcommand{\trkElm}{\rho}
\newcommand{\SetN}{N}
\newcommand{\decFun}{\vec{\alpha}}
\newcommand{\DecSet}{\ActProfile}
\newcommand{\pthElm}{\pi}
\newcommand{\TrkSet}{\mathit{Trk}}
\newcommand{\PthSet}{\mathit{Pth}}
\newcommand{\sElm}{s}
\newcommand{\IS}{model}
\newcommand{\GName}{{G}}
\newcommand{\trnFun}{\trans}
\newcommand{\XSet}{X}
\newcommand{\strFun}{f}
\newcommand{\StrSet}{Str}
\newcommand{\defeq}{=}
\newcommand{\VarSet}{\textit{Var}}
\newcommand{\pto}{\rightarrow}
\newcommand{\asgFun}{\chi}
\newcommand{\X}{X}
\newcommand{\Until}{U}
\newcommand{\U}{\Until}
\newcommand{\F}{F}
\newcommand{\R}{R}
\newcommand{\xElm}{x}
\newcommand{\lElm}{l}
\newcommand{\EExs}[1]{\exists\:\! {#1}}
\newcommand{\AAll}[1]{\forall\:\!\!{#1}}
\newcommand{\aElm}{a}
\newcommand{\playElm}{\pi}
\newcommand{\AsgSet}{Asg}
\newcommand{\playFun}{\mathit{play}}
\newcommand{\plays}{\mathit{plays}}
\newcommand{\free}[1]{\textit{free}{(#1)}}
\newcommand{\LTL}{LTL}
\newcommand{\pElm}{p}
\newcommand{\G}{G}
\newcommand{\States}{\mathit{St}}
\newcommand{\state}{s}
\newcommand{\Act}{Act}
\newcommand{\Ag}{Ag}
\newcommand{\ActProfile}{ACT}
\newcommand{\trans}{\tau}
\newcommand{\set}[1]{\{#1\}}
\newcommand{\tuple}[1]{\langle#1\rangle}
\newcommand{\powerset}[1]{2^{#1}}
\newcommand{\abstr}{\mathcal{A}}
\title{Scalable Verification of Strategy Logic through Three-valued Abstraction}
\author{
Francesco Belardinelli$^1$
\and
Angelo Ferrando$^2$\and
Wojciech Jamroga$^{3}$\and
Vadim Malvone$^4$ \and
Aniello Murano$^5$
\affiliations
$^1$Imperial College London, United Kingdom \\
$^2$University of Genoa, Italy\\
$^3$SnT, University of Luxembourg \& Institute of Computer Science, Polish Academy of Sciences\\
$^4$Telecom Paris, France \\
$^5$University of Naples Federico II, Italy
\emails
francesco.belardinelli@imperial.ac.uk,
angelo.ferrando@unige.it,
w.jamroga@ipipan.waw.pl,
vadim.malvone@telecom-paris.fr,
aniello.murano@unina.it
}
\begin{document}

\maketitle

\begin{abstract}
    The model checking problem for multi-agent systems against Strategy Logic specifications is known to be non-elementary. On this logic several fragments have been defined to tackle this issue but at the expense of expressiveness. In this paper, we propose a three-valued semantics for Strategy Logic upon which we define an abstraction method. We show that the latter semantics is an approximation of the classic two-valued one for Strategy Logic. Furthermore, we extend MCMAS, an open-source model checker for multi-agent specifications, to incorporate our abstraction method and present some promising experimental results.	
\end{abstract}

\section{Introduction}
\label{sec:Introduction}

In multi-agent systems, logics for strategic reasoning play a key
role. In this domain, one of the success stories is Alternating-time Temporal Logic
(\atls)\cite{AHK02}, which can express cooperation and competition
among teams of agents in order to achieve temporal goals, such
as fairness, liveness, safety requirements.  In fact, \atls\ extends the well
known branching-time temporal logic \ctls\ \cite{Halpern+86a} by
generalizing the existential $E$ and universal $A$ path
quantifiers of \ctls\ with the strategic modalities $\stratDiamond{C}$
and $\stratBox{C}$, where $C$ is a coalition of agents. However, it has been
observed that \atls\ suffers from a number of limitations that, on the
one hand, make the model-checking and satisfiability problems
decidable (both are $2$ExpTime-complete); but, on the
other hand, make the logic too weak to express key game-theoretic
concepts, such as Nash equilibria
\cite{Mogavero12stratLogic-decidable}. To overcome these limitations,
\emph{Strategy Logic} (\SL) \cite{MogaveroMPV14,CHP10} has been put forward.
A key aspect of \SL\ is to consider strategies as first-order objects that can be
existentially or universally
quantified over by means of the
strategy quantifiers $\exists x$ and $\forall x$,
respectively. Then, by means of a binding operator $(a,x)$, a strategy
$x$ can be associated to a specific agent $a$. This allows to reuse
strategies as well as to share them among different agents.
Since its introduction, \SL\ has proved to be a powerful formalism: it
can express complex solution concepts, including Nash equilibria, and
subsumes all previously introduced logics for strategic reasoning,
including \atls. The high expressivity of \SL\ has spurred its
analysis in a number of directions and extensions, such as
prompt \cite{Aminof16promptATL-KR}, graded \cite{AMMR18}, fuzzy
\cite{BKMMMP19}, probabilistic \cite{AKMMR19}, and imperfect
\cite{BMMRV21,BLMR20} strategic reasoning.

As one may expect, the high expressivity of \SL\ comes at a
price. Indeed, its model-checking problem turns out to be
non-elementary \cite{MogaveroMPV14}.
Moreover, the model checking procedure is not immune to
the well-known state-space explosion, as faithful models of real-world
systems are intrinsically complex and often infeasible even to generate, let
alone verify.  These issues call for techniques to make model
checking \SL\ amenable at least in practice. A technique that has been
increasingly used in industrial settings to verify hardware and
software systems is state abstraction, which allows to reduce the
state space to manageable size by clustering ``similar'' concrete states
into abstract states. Abstraction has been first introduced for
stand-alone systems \cite{Clarke94abstraction}, then extended to
two-agent system verification
\cite{GrumbergLLS07,ShohamGrumberg04,Bruns+00a,AminofKM12}.
Recently, abstraction approaches have been investigated for
multi-agent systems w.r.t.~\atls\ specifications
\cite{KouvarosL17,BelardinelliLM19,BelardinelliL17,JamrogaKP16,JamrogaKKP20,DBLP:journals/ai/BelardinelliFM23,DBLP:conf/atal/Ferrando23,DBLP:journals/jair/BelardinelliLMY22,DBLP:conf/kr/BelardinelliLM18,DBLP:conf/kr/BelardinelliM20,DBLP:conf/atal/FerrandoM21,DBLP:conf/paams/FerrandoM22}.  A natural
direction is then to investigate a form of abstraction suitable
for Strategy Logic as well.

\paragraph{Our Contribution.} In this paper we introduce the first notion of three-valued abstraction for \SL.
The contribution of this paper is threefold. First, in Sec.~\ref{sec:3valued} we
define a three-valued semantics for \SL\ where, besides the standard
truth values true $\top$ and false $\bot$, we have a third value
undefined $\uval$ that models situations where the verification
procedure is not able to return a conclusive answer. Second, in Sec.~\ref{sec:abstraction} we
introduce an abstraction procedure for \SL, which can reduce
significantly the size of the state space of \SL\ models, although at
the cost of making some formulas undefined.  The main theoretical
result is the Preservation Theorem \ref{theor_pres}, which allows us to model
check \SL\ formulas in the three-valued abstraction and then lift any
defined answer to the original two-valued model.
Third, in Sec.~\ref{sec:experiments} we evaluate
empirically the trade-off between state-space reduction and
definiteness, by applying our abstraction procedure to a scheduling
scenario. What we observe empirically is a significant reduction of
the model size, which allows us to verify instances that are not amenable
to current model checking tools.

\paragraph{Related Work.}
The present contribution is inspired by a long tradition of works on
the abstraction of MAS models, including through three-valued
semantics.  An abstraction-refinement framework for the temporal logic
CTL over a three-valued semantics was first studied
in~\cite{ShohamGrumberg04,ShohamG07}, and then extended to the
full $\mu$-calculus \cite{GrumbergLLS07}
and 
hierarchical systems \cite{AminofKM12}.
Three-valued abstractions for the verification of Alternating-time Temporal Logic
have been put forward
in~\cite{BallKupferman06,LomuscioMichaliszyn14b,LomuscioMichaliszyn15,LomuscioMichaliszyn16a}.
In~\cite{BallKupferman06,ShohamGrumberg04} ATL$^*$ is interpreted under perfect information; while \cite{LomuscioMichaliszyn14b,LomuscioMichaliszyn15,LomuscioMichaliszyn16a} consider {\em non-uniform} strategies~\cite{RaimondiLomuscio05d}.
Finally, \cite{JamrogaKP16,JamrogaKKP20} introduce a multi-valued
semantics for ATL$^*$ that is a conservative extension of the
classical two-valued variant.
Related to this line, three-valued logics have been extensively applied to system verification, including \cite{Bruns99threevalued,HuthJagadeesanSchmidt01,Godefroid+03a}

Clearly, we build in this long line of works, but the expressiveness of
\SL\ raises specific challenges that the authors of the contributions
above need not to tackle. We briefly mention them here and refer to specific sections for further details. First, we have to introduce individual \must
and \may actions and strategies as under- and over- approximations of
the behaviours of our agents.  Second, the loosely-coupled nature of
agents requires to consider non-deterministic transitions in the
abstraction (Sec.~\ref{sec:abstraction}).  Third, the arbitrary alternation of existential and
universal strategy quantifiers makes proving the Preservation
Theorem~\ref{theor_pres} significantly more challenging, and complicates our
experiments in verifying three-valued \SL\ in the two-valued model-checking
tool MCMAS (Sec.~\ref{sec:experiments}).

\section{Reasoning about Strategies}\label{sec:SL}

In this section we recall the definitions of basic notions for Strategy Logic~\cite{MogaveroMPV14}.

\subsection{Syntax}

		\emph{Strategy Logic} (\emph{\SL})
syntactically extends \LTL\ with two \emph{strategy
                  quantifiers}, the existential $\EExs{\xElm}$ and
                universal $\AAll{\xElm}$, and an \emph{agent binding}
                $(\aElm, \xElm)$, where $\aElm$ is an agent and
                $\xElm$ a variable.  Intuitively, these additional elements
                can be respectively read as \emph{``there exists a
                  strategy $\xElm$''}, \emph{``for all strategies
                  $\xElm$''}, and \emph{``bind agent $\aElm$ to the
                  strategy associated with the variable $\xElm$''}.
Since negated quantifiers often prove problematic in many valued settings,
we restrict the syntax of \SL\ to formulas in Negation Normal Form (NNF), without loss of expressiveness. In that case, the universal strategic quantifier $\AAll{\xElm}$ and the temporal operator ``Release'' $R$ are added as primitives, and negation is allowed only at the level of literals. Note that every formula of \SL\ can be equivalently transformed to one in NNF, with at most a linear blowup \cite{MogaveroMPV14}.
\begin{definition}[\SL\ Syntax]
			\label{def:sl(syntax)}
			Given the set  $\APSet$ of atoms,
                        variables $\VarSet$, and agents $\AgSet$, the
                        formal syntax of \SL\ is defined as follows, 			where $\pElm \in \APSet$, $\xElm \in \VarSet$,
                        and $\aElm \in \AgSet$:
\begin{eqnarray*}
			\varphi &::=& \pElm \mid \neg \pElm 
                        \mid \varphi \wedge \varphi
                        \mid \varphi \vee \varphi
				        \mid \allowbreak \EExs{\xElm} \varphi 
				        \mid \AAll{\xElm} \varphi 
				        \mid (\aElm, \xElm) \varphi \mid \\
                    &&    \X \varphi 
                        \mid \varphi \:\Until \varphi
                        \mid \varphi \,\R\, \varphi
			\end{eqnarray*}
\end{definition}

We introduce the derived temporal operators as usual: $\F\varphi =  \top \U \varphi$ (``eventually'') and $\G\varphi =  \bot \R \varphi$ (``always'').

		Usually, predicate logics need the concepts of free and bound
		\emph{placeholders} in order to formally define their semantics.
In \SL, since strategies can be associated to both
                agents and variables, we introduce the set of \emph{free
                  agents/variables} $\free{\varphi}$ as the subset of
                $\AgSet \cup \VarSet$ containing \emph{(i)} all agents
                $\aElm$ for which there is no binding $(\aElm, \xElm)$
                before the occurrence of a temporal operator, and
                \emph{(ii)} all variables $\xElm$ for which there is a
                binding $(\aElm, \xElm)$ but no quantification
                $\EExs{\xElm}$ or $\AAll{\xElm}$.  A formula $\varphi$
                without free agents (resp., variables), i.e., with
                $\free{\varphi} \cap \AgSet = \emptyset$ (resp.,
                $\free{\varphi} \cap \VarSet = \emptyset)$, is called
                \emph{agent-closed} (resp., \emph{variable-closed}).
                If $\varphi$ is both agent- and variable-closed, it is a \emph{sentence}.

\subsection{Two-valued Semantics}

We now provide a formal semantics to Strategy Logic.

\paragraph{Models.}
\!To model the behaviour of multi-agent systems, we use a variant of concurrent game structures~\cite{AHK02}.
\begin{definition}[CGS] \label{def:cgs}
	A \emph{concurrent game structure
(CGS)} is a tuple $G = \langle \Ag, \States, \state_0, \Act,
\trans, AP, V \rangle$ such that (i)
$\Ag$ is a finite, non-empty set of {\em agents}.
(ii) $\States$ is a finite, non-empty set of {\em states}, with {\em initial state}  $\state_0 \in \States$.	
(iii)  $\Act$ is a finite, nonempty  set of {\em actions}.
		We use $\ActProfile = \Act^{|\Ag|}$ for the set of all \emph{joint actions} (a.k.a.~\emph{action profiles}), i.e., tuples of individual actions, played synchronously by all agents.	
(iv) $\trans : \States \times \ActProfile \to \powerset{\States}$ is the \emph{transition function} assigning
		successor states $\{\state',\state'',\dots\} = \trans(\state, \vec{\alpha})$ to each
		state $\state\in \States$ and joint action $\vec{\alpha} \in \ActProfile$.
        We assume that the transitions in a CGS are deterministic, i.e., $\trans(\state, \vec{\alpha})$ is always a singleton.\footnote{
          The deterministic transitions in a CGS are usually defined by a function of type $\trans : \States \times \ActProfile \to \States$.
          We use a slightly different (but equivalent) formulation. This will make it easier for us to extend it to nondeterministic transitions in three-valued models (see Def.~\ref{def:3CGS}).}
(v) $AP$ is a set of \emph{atomic propositions}, and (vi) $V: \States \times AP \rightarrow \{\top,\bot\}$
		is a {\em two-valued labelling function}.
\end{definition}

By Def.~\ref{def:cgs} a CGS describes the interactions of a group $\Ag$ of
agents, starting from the initial state $\state_0 \in \States$, according to the
transition function $\trans$.
We use $\GName$ as a subscript for $\AgSet_\GName$, $\States_\GName$, etc., whenever the model is not clear from the context.

Note that the CGSs used in the semantics of Strategy Logic assume that all the actions are available to every agent at every state~\cite{MogaveroMPV14}. This is because a strategy assigned to variable $\xElm$ can be later associated with any agent $\aElm$ by means of the binding operator $(\aElm, \xElm)$. 
As a consequence, the available strategies (and hence also the available actions) 
are the same for every agent.

\paragraph{Tracks, Paths, Strategies.}\label{sec:paths}
\!\!\new{We denote the $i$-th element of a tuple $v$ as $v_i$, the prefix of $v$ of lenght $i$ as $v_{\leq i}$, and with $last(v)$ as the last element of $v$.}
A \emph{track} is a {\em finite nonempty} sequence of states $\trkElm \in \StSet^{+}$
such that, for all $0 \le i \le |\trkElm| - 1$,
there is an action profile $\decFun \in \DecSet$ with $(\trkElm)_{i + 1} \in \trnFun((\trkElm)_{i}, \decFun)$.
Similarly, a \emph{path} is an {\em infinite} sequence of states $\pthElm \in \StSet^{\omega}$
such that, for all $i \in \SetN$,
there is $\decFun \in \DecSet$ with $(\pthElm)_{i + 1} \in \trnFun((\pthElm)_{i}, \decFun)$.
The set $\TrkSet \subseteq \StSet^{+}$ contains all the tracks in the model, and $\TrkSet(\sElm)$ the tracks starting at state $\sElm \in \StSet$.
The sets $\PthSet$ and $\PthSet(\sElm)$ are defined analogously.
We denote the prefix of a path $\pi$ up to position $i \in \mathbb{N}$ as $\pi_{\leq i}$.

A \emph{strategy} is a partial function $\strFun :
              \TrkSet \pto \AcSet$ that maps each 
track in its domain to an action.
Intuitively, a strategy is a conditional plan that,
              for some tracks of $\GName$, prescribes an
              action to be executed.
A strategy is {\em memoryless} (or positional), if $last(\rho) = last(\rho')$ implies $\strFun(\rho) = \strFun(\rho')$, that is, the strategy only depends on the last state.
The set $\StrSet \defeq \TrkSet \pto \AcSet$ (resp.,
              $\StrSet(\sElm) \defeq \TrkSet(\sElm) \to \AcSet$)
              contains all 
strategies \new{(resp., strategies starting from $s$)}.

\paragraph{Assignments.}
\!\!Let $\VarSet$ be the set of variables.  An
                \emph{assignment}
is a partial
                function $\asgFun : \VarSet \cup \AgSet \pto \StrSet$
                mapping variables and agents in its domain to
                strategies.
An assignment $\asgFun$ is \emph{complete} if it is
                defined on all agents, i.e., $\AgSet \subseteq
                dom(\asgFun)$.  
The set $\AsgSet \defeq \VarSet \cup \AgSet \pto
                \StrSet$ 
contains all 
assignments.  More\-over, $\AsgSet(\XSet)
                \!\defeq \!\XSet \!\to\! \StrSet$ 
indicates the
                subset of \emph{$\XSet$-defined} 
assignments, i.e., 
assignments defined on $\XSet \subseteq
                \VarSet \cup \AgSet$.

		As in first-order logic, in order to quantify over
                strategies or bind a strategy to an agent, we update
                an assignment $\asgFun$ by associating an agent or a
                variable $\lElm$ with a new strategy $\strFun$.  Let
                $\asgFun \in \AsgSet$ be an assignment, $\strFun \in
                \StrSet$ a strategy and $\lElm \in \VarSet \cup
                \AgSet$ either an agent or a variable.  Then,
                $\asgFun[\lElm \mapsto \strFun] \in \AsgSet$ denotes
                the new assignment 
that returns $\strFun$ on $\lElm$ and the
                same value that $\asgFun$ would return on the rest of
                its domain.

\paragraph{Outcome Plays of a Strategy.}
\!\!A \emph{play} is the unique outcome of the game
                settled by all agent strategies engaged in
                it. Formally, given a state $\sElm \!\in\! \StSet$ and a
                               complete 
assignment $\asgFun \in
                \AsgSet(\sElm)$, the function $\playFun(\asgFun,
                \sElm)$ returns the path $\playElm \in \PthSet(\sElm)$
                such that, for all $i \in \SetN$, it holds that
                $\{\playElm_{i + 1}\} = \trnFun(\playElm_{i},
                \decFun)$, where $\decFun(\aElm) \defeq
                \asgFun(\aElm)(\playElm_{\leq i})$ for each $\aElm
                \in \AgSet$.

We now define the translation of an
assignment together with a related path (resp.~state).
It is used to keep track, at a certain stage of the play, of the current state and its updated assignment. 
For a path $\playElm$ and an 
assignment $\asgFun \in \AsgSet$, the \emph{$i$-th global translation} of $(\asgFun,
\playElm)$ with $i \in \SetN$ is the pair $(\asgFun, \playElm)^{i} \defeq
 (\asgFun_{\playElm_{\leq i}}, \playElm_{i})$ of an assignment and a state.
Moreover, for a state $\sElm \in \StSet$, we define
$(\asgFun, \sElm)^{i} \defeq(\asgFun, \playFun(\asgFun, \sElm))^{i}$.

		As in the case of components of a \IS, in order to
                avoid any ambiguity, we sometimes use the name of the
                \IS\ as a subscript of the sets and functions
                introduced above.

\paragraph{Satisfaction.}
The (two-valued) satisfaction relation for \SL\ is defined as follows.
		\begin{definition}[Two-valued Satisfaction]
\label{def:sl(semantics)}
			Given a \IS\ $\GName$, for all \SL\ formulas
                        $\varphi$, states $\sElm \in \StSet$, and
assignments $\asgFun \in
                        \AsgSet$ with $\free{\varphi} \subseteq
                        dom(\asgFun)$, the satisfaction relation $(\GName,
                        \asgFun, \sElm) \models^2 \varphi$ is
                        inductively defined as follows:
\begin{itemize}[leftmargin=15pt]
				\item\label{def:sl(semantics:ap)}
                                  $(\GName, \asgFun, \sElm) \models^2
                                  \pElm$ iff $V(\sElm, \pElm ) =
                                  \top$, for $\pElm \in \APSet$.
                                  
				\item\label{def:sl(semantics:bool)}
                                  Boolean operators are interpreted as
                                  usual.
\item\label{def:sl(semantics:eqnt)}
                                                  $(\GName, \asgFun,
                                                  \sElm) \models^2
                                                  \EExs{\xElm}
                                                  \varphi$ iff for some
strategy \new{$\strFun
                                                  \in \StrSet(s)$},
                                                  $(\GName,
                                                  \asgFun[\xElm
                                                    \mapsto \strFun],
                                                  \sElm) \models^2
                                                  \varphi$.
						\item\label{def:sl(semantics:aqnt)}
							$(\GName,
                                  \asgFun, \sElm) \models^2
                                  \AAll{\xElm} \varphi$ iff for all
strategies \new{$\strFun
                                  \in \StrSet(s)$},
$(\GName,{ \asgFun[\xElm \mapsto
                                    \strFun]}, \sElm) \models^2
                                  \varphi$.
\item\label{def:sl(semantics:bnd)} 
$(\GName, \asgFun, \sElm)
                                  \models^2 (\aElm, \xElm) \varphi$ iff
                                  $(\GName, \asgFun[\aElm \mapsto
                                    \asgFun(\xElm)], \sElm) \models^2
                                  \varphi$.
				\item\label{def:sl(semantics:path)}
                                  Finally, if the assignment $\asgFun$
                                  is also complete, 
it holds that:
					\begin{itemize}[leftmargin=10pt]
						\item\label{def:sl(semantics:next)}
                                                  $(\GName, \asgFun,
                                                  \sElm) \models^2 \X
                                                  \varphi$ iff $(\GName,
                                                  (\asgFun, \sElm)^{1})
                                                  \models^2 \varphi$;
						\item\label{def:sl(semantics:until)}
                                                  $(\GName, \asgFun,
                                                  \sElm) \models^2
                                                  \varphi_{1} \Until
                                                  \varphi_{2}$ iff
                                                  for some index $i
                                                  \in \SetN$, 
                                                  $(\GName, (\asgFun,
                                                  \sElm)^{i}) \models^2
                                                  \varphi_{2}$ and,
                                                  for all $j < i$, it
                                                  holds that $(\GName,
                                                  (\asgFun, \sElm)^{j})
                                                  \models^2
                                                  \varphi_{1}$;
						\item\label{def:sl(semantics:release)}
                                $(\GName, \asgFun, \sElm) \models^2 \varphi_{1} \R \varphi_{2}$ iff, 
for all $i \in \SetN$, $(\GName, (\asgFun,\sElm)^{i}) \models^2 \varphi_{2}$ or
                                there is $j\le i$ such that $(\GName, (\asgFun, \sElm)^{i}) \models^2 \varphi_{1}$.
\end{itemize}
			\end{itemize}
		\end{definition}

Due to the semantics of the Next $X$,
                Until $U$, and Release $R$ operators,
\LTL\ semantics is clearly embedded into the \SL\ one.
                Furthermore, since the satisfaction of a sentence
                $\varphi$ does not depend on assignments, we omit them
                and write $(\GName, \sElm) \models \varphi$, when
                $\sElm$ is a generic state in $\StSet$, and $\GName
                \models \varphi$ when $\sElm = \sElm_0$.

Note that we can easily define the memoryless variant of strategy logic by restricting the clauses for operators $\EExs{\xElm}$ and $(\aElm, \xElm)$ to memoryless strategies.

Finally, we define the (two-valued) model checking problem for \SL\ as determining whether an \SL\ formula $\phi$ holds in a CGS $G$, that is, whether $G \models^2 \phi$. We conclude this section by stating the related complexity result.
\begin{theorem}[\cite{MogaveroMPV14}]
    The model checking problem for Strategy Logic is non-elementary.           
\end{theorem}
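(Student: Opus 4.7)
The plan is to establish non-elementary hardness by a polynomial reduction from the satisfiability (equivalently, model checking over a trivial structure) of Quantified Propositional Temporal Logic (QPTL), which is well known to be $k$-EXPSPACE-hard for prenex formulas with $k$ alternations of propositional quantifiers and hence non-elementary overall. The matching upper bound would be obtained through a tower of automata constructions, which I would simply cite from \cite{MogaveroMPV14}. The technical core of the paper is the lower bound, so I focus the sketch there.

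First I would fix a QPTL sentence $\psi = Q_1 p_1 \, Q_2 p_2 \cdots Q_k p_k.\, \phi$, where each $Q_i \in \{\exists,\forall\}$ and $\phi$ is a pure \LTL\ formula over propositions $p_1,\dots,p_k$. I would then build a CGS $\GName$ with a single agent $\aElm$, action set $\{0,1\}$, a linearly ordered chain of states that deterministically advances one position at every step, and auxiliary atomic propositions $q_1,\dots,q_k$ that will be used to expose the strategies' choices to the formula. The idea is that each strategy of agent $\aElm$, when played against $\GName$, induces a unique infinite binary sequence, mirroring the behaviour of a propositional variable in QPTL.

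Next I would translate $\psi$ into an \SL\ sentence
\[
\Phi \;=\; \bar Q_1 \xElm_1 \bar Q_2 \xElm_2 \cdots \bar Q_k \xElm_k \; (\aElm,\xElm_1)(\aElm,\xElm_2)\cdots(\aElm,\xElm_k)\; \phi',
\]
where $\bar Q_i$ denotes the corresponding \SL\ strategy quantifier and $\phi'$ is obtained from $\phi$ by replacing each $p_i$ with an atomic formula that reads out the action dictated by $\xElm_i$ at the current position of the play. Since the underlying CGS is a deterministic chain, the binding of $\xElm_i$ to the (unique) agent determines a well-defined infinite bit-stream, so the strategy quantifiers of \SL\ range over exactly the same domain as the propositional quantifiers of QPTL. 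A routine induction on the prefix of quantifiers and on the structure of $\phi$ then shows that $\psi$ is satisfiable iff $\GName \models^2 \Phi$, and the blow-up of the encoding is linear.

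The main obstacle is the read-out step: the ``value'' of $p_i$ at the current time is not itself an atomic proposition of $\GName$ but a function of the bound strategy. I would handle this by enriching $\GName$ so that after every transition the label of the new state exposes the most recently played action (for instance by duplicating each chain state into two copies indexed by the action just taken), and then expressing $p_i$ inside $\phi'$ by a short formula that rebinds $\xElm_i$ to $\aElm$ and checks the relevant label at the very next step. The bookkeeping needed to ensure that each $x_i$ is read out independently, without the bindings of different $x_j$'s interfering, is the fiddly part, but it is purely syntactic and does not change the alternation depth of the prefix. Once this is in place, the $k$-EXPSPACE hardness of QPTL transfers to \SL\ model checking, yielding the claimed non-elementary lower bound.
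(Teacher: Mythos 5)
The paper itself offers no proof of this theorem: it is imported verbatim by citation from \cite{MogaveroMPV14}, so there is no in-paper argument to compare against. That said, your overall route --- non-elementary hardness by reduction from QPTL satisfiability, with the matching upper bound obtained via automata constructions --- is precisely the strategy of the cited proof, so at the level of the plan you have identified the right source problem and the right shape of argument.

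The sketch has a genuine gap at its core, however, and it is not the ``purely syntactic bookkeeping'' you claim. First, the binding prefix $(\aElm,\xElm_1)(\aElm,\xElm_2)\cdots(\aElm,\xElm_k)$ is self-defeating: successive bindings to the same agent override one another, so after the prefix the agent follows $\xElm_k$ alone and $\xElm_1,\dots,\xElm_{k-1}$ never act. You anticipate this and propose reading out $p_i$ by locally rebinding $\aElm$ to $\xElm_i$ at each position, but this exposes the real problem: a strategy is a function of \emph{histories}, and in your action-recording chain the history at time $t$ encodes the actions of whichever strategy has been driving the main play --- i.e., of $\xElm_k$. Hence the bit-stream that $\xElm_i$ ``denotes'' is $t \mapsto \xElm_i(\rho_{\leq t})$ for a play $\rho$ that depends on strategies quantified \emph{after} $\xElm_i$. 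For a prefix such as $\exists\xElm_1\forall\xElm_2$ the adversary can then retroactively change which sequence $\xElm_1$ realizes, and the claimed equivalence with $\exists p_1\forall p_2.\,\phi$ breaks. Repairing this requires making the relevant histories independent of the quantified strategies (e.g., a pure clock whose state at time $t$ is determined by $t$ alone, with actions observed only inside a one-step read-out deviation), or one agent per variable together with an argument that only the time-indexed projection of each strategy matters; this decoupling is the actual crux of the reduction in \cite{MogaveroMPV14}. Without it, your ``routine induction on the prefix'' does not go through.
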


\section{Three-Valued Strategy Logic}\label{sec:3valued}

In this section we introduce a novel three-valued semantics for Strategy Logic starting by extending CGSs.

\subsection{Three-Valued CGSs}

We extend (two-valued) CGSs with \must and \may transitions as under- and over-approximations of the strategic abilities of agents.
\begin{definition}[Three-valued CGS]\label{def:3CGS}
\mbox{\!A \emph{three-valued CGS} 
is a} tuple $\GName \!=\!\tuple{\Ag, \States, \state_0, \Act^{\may}\!, \Act^{\must}\!,
\trans^{\may}\!, \trans^{\must}\!, AP, V}$, where:
\begin{itemize}[leftmargin=15pt]
\item $\Ag, \States, \state_0, AP$ are defined as 
in Def.~\ref{def:cgs}.

\item $\Act^{\may}$ and $\Act^{\must}$ provide respectively the upper and lower approximation of the available actions. We assume that $\Act^{\must} \subseteq \Act^{\may}$. The sets of \may and \must action profiles are given by
    $\ActProfile^{\may} = (\Act^{\may})^{|\Ag|}$ and $\ActProfile^{\must} = (\Act^{\must})^{|\Ag|}$, respectively.

\item $\trans^{\may}: \States \times \ActProfile^{\may} \rightarrow \powerset{\States}$ is the \may transition function,
and $\trans^{\must}: \States \times \ActProfile^{\may} \rightarrow \powerset{\States}$ the \must transition function.
Note that both functions are possibly nondeterministic and are \emph{defined} on all the {potential} action profiles in the system, i.e., $\ActProfile^{\may}$.
However, we only require that they \emph{return nonempty successor sets} on their respective action profiles.
That is, $\trans^{\may}(\state,\vec{\alpha}) \neq \emptyset$ for every state $\state\in\States$ and action profile $\vec{\alpha}\in\ActProfile^{\may}$, and
$\trans^{\must}(\state,\vec{\alpha}) \neq \emptyset$ for every state $\state\in\States$ and action profile $\vec{\alpha}\in\ActProfile^{must}$.\footnote{Note that the function $\trans^{\must}$ is total because we assume the empty set as an element of the co-domain.} Moreover, it is required that $\trans^{\must}(\state,\vec{\alpha}) \subseteq \trans^{\may}(\state,\vec{\alpha})$ for every $\state\in\States$ and $\vec{\alpha}\in\ActProfile^{\may}$. In other words, every \must transition is
also a \may transition, but not necessarily viceversa.

\item The labelling function $V : \StSet \times \APSet \to \set{\bot, \top, \uval}$ maps now each pair of a state and an atom to a truth value of ``true,'' ``false,'' or ``undefined.''
\end{itemize}
\end{definition}

The notions of tracks, paths, and the definitions of sets $\TrkSet,\TrkSet(\sElm),\PthSet,\PthSet(\sElm)$ carry over from Section~\ref{sec:paths}.

\paragraph{May/Must Strategies and their Outcomes.}

A \emph{$\may$-strategy} (resp.~\emph{$\must$-strategy}) is a function $\strFun : \TrkSet \pto \AcSet^{\may}$ (resp.~$\AcSet^{\must}$)
that maps each 
track 
to a $\may$ (resp.~$\must$) action. Note that each $\must$-strategy is a $\may$-strategy, but not necessarily the other way around. Moreover, we can define memoryless \may- and \must-strategies in the standard way.
The sets $\StrSet^{\may}$ and 
$\StrSet^{\must}$
are defined analogously to Section~\ref{sec:paths}.

Given a state $\state\in\States$ and a profile of (\may and/or \must) strategies, represented by a complete 
assignment $\asgFun \in \AsgSet$, we define two kinds of outcome sets, $\plays^{\may}(\asgFun,\state)$ and $\plays^{\must}(\asgFun,\state)$. The former over-approximates the set of paths that can really occur when executing $\asgFun$ from $\state$, while the latter under-approximates it.
Typically, we will use $\plays^{\may}$ to establish that the value of a temporal formula $\varphi$ is $\top$ (if $\varphi$ holds in all such paths), and $\plays^{\must}$ 
for $\bot$ (if $\varphi$ is false in at least one path).
Formally, the function $\plays^{\may}(\asgFun,\sElm)$ returns the paths $\playElm \in \PthSet(\sElm)$
such that, for all $i \in \SetN$, it holds that $\playElm_{i + 1} \in \trnFun^{\may}(\playElm_{i},\decFun)$,
where $\decFun(\aElm) \defeq$ $ \asgFun(\aElm)(\playElm_{\leq i})$ for each $\aElm \in \AgSet$.
The definition of $\plays^{\must}(\asgFun,\sElm)$ is analogous, only with $\trnFun^{\must}$ being used instead of $\trnFun^{\may}$.

\subsection{Three-valued Semantics}

We now define the Three-valued satisfaction relation for Strategy Logic.
\begin{definition}[Three-valued Satisfaction]
			\label{def:sl3(semantics)}
			Given a 3-valued \IS\ $\GName$, for all
                        \SL\ formulas $\varphi$, states $\sElm \in
                        \StSet$, and 
assignments
                        $\asgFun \!\in \! \AsgSet(\sElm)$ with
                        $\free{\varphi}\! \subseteq \! dom(\asgFun)$, the
                        satisfaction relation $(\GName, \asgFun, \sElm
                        \models^3 \varphi) \! = \! tv$ is inductively
                        defined as follows.
\begin{itemize}[leftmargin=9pt]
\item\label{def:sl3(semantics:ap)}
                              $(\GName, \asgFun, \sElm \models^3
                              \pElm)\ =\ V(\sElm,
                              \pElm)$, for $\pElm \in
                              \APSet$.
\item\label{def:sl3(semantics:bool)}
                              Boolean operators are interpreted as
                              in {\L}ukasiewicz's three valued logic~\cite{Lukasiewicz20logic}. 
\item\label{def:sl3(semantics:qnt)} 
For $\phi = \exists x \varphi$,
\begin{itemize}[leftmargin=8pt]
  \item\label{def:sl3(semantics:eqnt)}
    $(\GName, \asgFun, \sElm \models^3 \phi) = \top$ iff
    $(\GName, \asgFun[\xElm \mapsto \strFun], \sElm \models^3 \varphi) =\top$
    for some 
$\must$-strategy \new{$\strFun \in \StrSet^{\must}(s)$};
  \item $(\GName, \asgFun, \sElm \models^3 \phi) = \bot$ iff
    $(\GName, \asgFun[\xElm \mapsto \strFun], \sElm \models^3 \varphi) = \bot$
    for all 
$\may$-strategies $\strFun \in \StrSet^{\may}(\sElm)$;
  \item otherwise, $(\GName, \asgFun, \sElm \models^3 \phi) = \uval$.
\end{itemize}
\item For $\phi = \forall x \varphi$,
  \begin{itemize}[leftmargin=8pt]
  \item\label{def:sl3(semantics:aqnt)}
                                                  $(\GName, \asgFun,
                                                  \sElm \models^3
                                                  \phi) = \top$ iff
                                                  for all
$\may$-strategies
                                                  \new{$\strFun \in
                                                  \StrSet^{\may}(s)$},
                                                  $(\GName,
                                                  \asgFun[\xElm
                                                    \mapsto \strFun],
                                                  \sElm \models^3
                                                  \varphi) = \top$;
						\item $(\GName,
                                                  \asgFun, \sElm
                                                  \models^3
                                                  \phi) = \bot$ iff
                                                  for some
$\must$-strategy
                                                  \new{$\strFun \in
                                                  \StrSet^{\must}(s)$},
                                                  $(\GName,
                                                  \asgFun[\xElm
                                                    \mapsto \strFun],
                                                  \sElm \models^3
                                                  \varphi) = \bot$;
  \item otherwise, $(\GName, \asgFun, \sElm \models^3 \phi) = \uval$.
  \end{itemize}
\item\label{def:sl3(semantics:bnd)} 
$(\GName, \asgFun, \sElm \models^3 (\aElm, \xElm) \varphi)\ =\
   (\GName, \asgFun[\aElm \mapsto \asgFun(\xElm)], \sElm \models^3 \varphi)$.
\item\label{def:sl3(semantics:path)}
                              Finally, if the assignment $\asgFun$
                              is also complete, 
we define:
\begin{itemize}[leftmargin=8pt]
\item\label{def:sl3(semantics:next)}
  $(\GName, \asgFun, \sElm \models^3 \X \varphi) = \top$ iff
  for all $\playElm \in \plays^{\may}(\asgFun,\sElm)$, 
  we have $(\GName, (\asgFun, \playElm)^{1} \models^3 \varphi) = \top$;
\item
  $(\GName, \asgFun, \sElm \models^3 \X \varphi) = \bot$ iff 
  for some $\playElm \in \plays^{\must}(\asgFun,\sElm)$, 
  we have $(\GName, (\asgFun, \playElm)^{1} \models^3 \varphi) = \bot$;
\item
  otherwise, $(\GName, \asgFun, \sElm \models^3 \X \varphi) = \uval$.

\smallskip
\item\label{def:sl3(semantics:until)}
  $(\GName, \asgFun,\sElm \models^3 \varphi_{1} \Until \varphi_{2}) = \top$ iff
  for all $\playElm \in \plays^{\may}(\asgFun,\sElm)$, 
  there is $i \in \SetN$ such that 
  $(\GName, (\asgFun, \playElm)^{i} \models^3 \varphi_{2}) = \top$,
  and for all $j < i$ we have $(\GName, (\asgFun, \playElm)^{j} \models^3 \varphi_{1}) = \top$;
\item
  $(\GName, \asgFun, \sElm \models^3 \varphi_{1} \Until \varphi_{2}) = \bot$ iff
  for some $\playElm \in \plays^{\must}(\asgFun,\sElm)$ 
  and all $i \in \SetN$, 
  we have $(\GName, (\asgFun, \playElm)^{i} \models^3 \varphi_{2}) = \bot$
  or there exists $j < i$  such that $(\GName, (\asgFun, \playElm)^{j} \models^3 \varphi_{1}) = \bot$;
\item
  otherwise, $(\GName, \asgFun, \sElm \models^3 \varphi_{1} \Until \varphi_{2}) = \uval$.

\smallskip
\item\label{def:sl3(semantics:release)} 
  $(\GName, \asgFun, \sElm \models^3 \varphi_{1} \R \varphi_{2}) = \top$ if
  for all $\playElm \in \plays^{\may}(\asgFun,\sElm)$ 
  and $i \in \SetN$, 
  we have $(\GName, (\asgFun, \playElm)^{i} \models^3 \varphi_{2}) = \top$
  or there exists $j \le i$ such that $(\GName, (\asgFun, \playElm)^{j} \models^3 \varphi_{1}) = \top$;

\item
  $(\GName, \asgFun, \sElm \models^3 \varphi_{1} \R \varphi_{2}) = \bot$ iff
  for some $\playElm \in \plays^{\must}(\asgFun,\sElm)$ 
  and $i \in \SetN$, 
  we have $(\GName, (\asgFun, \playElm)^{i} \models^3 \varphi_{2}) = \bot$
  and for all $j \le i$, we have $(\GName, (\asgFun, \playElm)^{j} \models^3 \varphi_{1}) = \bot$;

\item
  otherwise, $(\GName, \asgFun, \sElm \models^3 \varphi_{1} \R \varphi_{2}) = \uval$.
\end{itemize}
\end{itemize}
\end{definition}

Again, we can define the memoryless, three-valued satisfaction relation for \SL
by restricting the clauses for operators $\EExs{\xElm}$, $\forall x$, and $(\aElm, \xElm)$ to memoryless strategies.
Similarly to Section~\ref{sec:SL}, if $\phi$ is a sentence, then $(\GName, \sElm \models^3 \varphi) = (\GName, \asgFun, \sElm \models^3 \varphi)$ for any assignment $\asgFun$, and $(\GName \models^3 \varphi) = (\GName, \sElm_0 \models^3 \varphi)$.

We now show that our three-valued semantics in Def.~\ref{def:sl(semantics)} is a conservative extension of the standard two-valued interpretation in Sec.~\ref{sec:SL}.
\begin{theorem}[Conservativeness] \label{lemma1}
	\label{c3v}
	Let $\G$ be a standard CGS, that is, 
$\AcSet^{\may} = \AcSet^{\must}$,
	$\trans^{\\may} = \trans^{\\must}$ are functions,
	and the truth value of every atom is defined (i.e., it is equal to either $\top$ or $\bot$).
	Then, for every formula $\phi$ in SL,
	\begin{eqnarray} (\G, \chi, \state \models^3 \phi ) = \top
		& \Leftrightarrow & (\G, \chi, \state) \models^2 \phi \label{eq1} \\
		(\G, \chi, \state \models^3 \phi ) = \bot & \Leftrightarrow & (\G, \chi,
		\state) \not \models^2 \phi \label{eq2}
	\end{eqnarray}
\end{theorem}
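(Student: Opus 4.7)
}
The plan is to prove (\ref{eq1}) and (\ref{eq2}) \emph{simultaneously} by structural induction on the \SL\ formula $\phi$, exploiting the following three collapses induced by the standardness hypothesis: (a) since $V(\sElm,\pElm)\in\{\top,\bot\}$ on every atom, the undefined value $\uval$ never enters the base case; (b) since $\AcSet^{\may}=\AcSet^{\must}$, we have $\StrSet^{\may}(\sElm)=\StrSet^{\must}(\sElm)$, so the ``some \must-strategy'' and ``all \may-strategy'' quantifications in Def.~\ref{def:sl3(semantics)} become standard existential and universal quantifications over one single set of strategies; (c) since $\trnFun^{\may}=\trnFun^{\must}$ is a deterministic function, for every complete $\asgFun$ and state $\sElm$ the two outcome sets $\plays^{\may}(\asgFun,\sElm)$ and $\plays^{\must}(\asgFun,\sElm)$ coincide and are the singleton $\{\playFun(\asgFun,\sElm)\}$.

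First I would dispatch the atomic and Boolean cases: atoms are immediate from (a); the Boolean cases follow because \L{}ukasiewicz's conjunction, disjunction and negation agree with the classical ones on $\{\top,\bot\}$, and by the induction hypothesis the subformulas never evaluate to $\uval$, so the three-valued clauses reduce verbatim to the two-valued ones. Next I would handle the strategy quantifiers. For $\EExs{\xElm}\varphi$, the $\top$-clause of the three-valued semantics asks for a \must-strategy, which by (b) is just a strategy, and by IH (equation~(\ref{eq1}) for $\varphi$) this matches the two-valued clause. For the $\bot$-clause, ``for all \may-strategies $\strFun$, $(\GName,\asgFun[\xElm\mapsto\strFun],\sElm\models^3\varphi)=\bot$'' is, by (b) and IH~(\ref{eq2}), equivalent to ``no strategy makes $\varphi$ hold in the two-valued sense,'' i.e.\ to $(\GName,\asgFun,\sElm)\not\models^2\EExs{\xElm}\varphi$. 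Note that, once (\ref{eq1}) and (\ref{eq2}) are established for $\varphi$, the $\uval$-branch for $\EExs{\xElm}\varphi$ is vacuous, as the two upper branches exhaust all cases; this is exactly what keeps the induction going. The universal quantifier is symmetric, and the binding $(\aElm,\xElm)\varphi$ is immediate from the IH since the clause is definitionally transparent.

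For the temporal operators I would use (c) to collapse the $\may$ and $\must$ outcome sets into the single play $\playElm=\playFun(\asgFun,\sElm)$. Then the $\top$-clause of $\X\varphi$ (``for all $\playElm\in\plays^{\may}$'') and the $\bot$-clause (``for some $\playElm\in\plays^{\must}$'') both become a statement about the unique $\playElm$, and by IH reduce to the two-valued clause for $\X$. The same argument works for $\U$ and $\R$: the inner conditions on indices $i,j\in\SetN$ are purely syntactic pattern-matches of the two-valued clauses once $\playElm$ is fixed as the unique outcome, and the IH transports truth values from subformulas along the way.

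The main obstacle is a bookkeeping one rather than a conceptual one: I must carry both (\ref{eq1}) and (\ref{eq2}) jointly in the induction hypothesis, because the $\bot$-clause for each connective talks about the $\bot$ value of subformulas, while ruling out $\uval$ at the current level requires knowing that exactly one of the two branches ($\top$ or $\bot$) applies to every subformula. Concretely, after proving (\ref{eq1}) and (\ref{eq2}) for all proper subformulas, I need to observe that each three-valued clause for $\phi$ is a disjunction of a $\top$-condition and a $\bot$-condition that are mutually exclusive and, under standardness together with (a)--(c) and the IH, jointly exhaustive---so the ``otherwise $\uval$'' branch is never reached. Once this is established, (\ref{eq1}) and (\ref{eq2}) for $\phi$ follow by inspection of the matching clauses in Def.~\ref{def:sl(semantics)} and Def.~\ref{def:sl3(semantics)}.
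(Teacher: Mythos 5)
Your proposal is correct and follows essentially the same route as the paper: the paper's own proof is a one-line observation that under the standardness hypotheses the three-valued clauses collapse to the two-valued ones, and your structural induction with the three collapses (defined atoms, $\StrSet^{\may}=\StrSet^{\must}$, singleton outcome sets) together with the joint induction hypothesis ruling out $\uval$ is precisely the detailed version of that argument.
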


\begin{proof}
The result follows from the fact that in standard CGS the clauses for the three-valued satisfaction relation collapse to those for two-valued satisfaction, whenever 
$\AcSet^{\may} = \AcSet^{\must}$,
$\trans^{\\may} = \trans^{\\must}$ are functions,
	and the truth value of every atom is defined.
\end{proof}

\begin{remark}[Model checking]
For any syntactic fragment $\mathcal{L}$ of \SL, model checking of $\mathcal{L}$ with 3-valued semantics can be reduced to 2-valued model checking of $\mathcal{L}$ by a construction similar to~\cite[Theorem~4]{JamrogaKP16}. 
Note also that 2-valued model checking for $\mathcal{L}$ is a special case of its 3-valued counterpart, due to Theorem~\ref{lemma1}.
Thus, the decidability and complexity for 2-valued model checking in fragments of \SL carry over to 3-valued verification.
\end{remark}
 
﻿\section{Three-valued Abstraction for \SL} \label{sec:abstraction}

Here, we define the 3-valued state abstraction for CGS. The idea is to cluster the states of a CGS (called the \emph{concrete model}) according to a given equivalence relation $\approx$, e.g., one provided by a domain expert. 
\new{Typically, two states are deemed equivalent if they agree on the evaluation of atoms, possibly just the atoms appearing on a given formula $\phi$ to be checked. In some cases, such an equivalence relation might be too coarse and therefore more domain-dependent information could be taken into account.} 

Then, the sets of may (resp. must) actions and the may (resp. must) transitions are computed in such a way that they always overapproximate (resp. underapproximate) the actions and transitions in the concrete model. Formally, the abstraction is defined as follows.

\begin{definition}[Abstraction]\label{def:abstraction}
\!Let $\GName = \langle \Ag, \States, \state_0, \Act, \trans, $ $ AP, V \rangle$ be a CGS, and $\approx\subset\States\times\States$ an equivalence relation. We write $[\state]$ for the equivalence class of $\approx$ that contains $\state$.
The \emph{abstract model of $G$} w.r.t.~$\approx$ is defined as the 3-valued CGS
$\abstr(\GName) = \langle \abstr(\Ag), \abstr(\States), \abstr(\state_0), \abstr^{\may}(\Act), \abstr^{\must}(\Act), \abstr^{\may}(\trans),\linebreak \abstr^{\must}(\trans), \abstr(AP), \abstr(V) \rangle$, with:
\begin{itemize}[leftmargin=15pt]
\item
  $\abstr(\Ag) = \Ag$ and  $\abstr(AP) = AP$.

\item
  $\abstr(\States) = \{[\state] \mid \state\in\States \}$ with
$\abstr(\state_0) = [\state_0]$.
\item
  $\abstr^{\may}(\Act) = \Act$.
\item
  $\abstr^{\may}(\trans) = \trans^{\may} : \abstr(\States) \times (\abstr^{\may}(\Act))^{|\abstr(\Ag)|} \to \powerset{\abstr(\States)}$ such that\qquad
  $\trans^{\may}([\state],\vec{\alpha}) =$ \\
  \centerline{$\{[\state_{succ}] \mid \exists \state'\in[\state]\ \exists \state'_{succ}\in[\state_{succ}]\ .\  \state'_{succ}\in\trans(\state',\vec{\alpha}) \}$).}
\item
  $\abstr^{\must}(\trans) = \trans^{\must} : \abstr(\States) \times (\abstr^{\may}(\Act))^{|\abstr(\Ag)|} \to \powerset{\abstr(\States)}$ such that\qquad
  $\trans^{\must}([\state],\vec{\alpha}) =$ \\
  \centerline{$\{[\state_{succ}] \mid \forall \state'\in[\state]\ \exists \state'_{succ}\in[\state_{succ}]\ .\  \state'_{succ}\in\trans(\state',\vec{\alpha}) \}$).}
\item
  $\abstr^{\must}(\Act)$ is a maximal\footnote{with respect to set inclusion.} set $\Act^{\must} \subseteq \Act$ such that
  $\forall \state\in\States\ \forall\vec{\alpha} \in (\Act^{\must})^{|\abstr(\Ag)|}\ .\ \trans^{\must}([\state],\vec{\alpha}) \neq \emptyset$.
  Note that a unique maximal set does not always exist. In such cases, a natural heuristics would be to choose the maximal subset of actions with the largest cardinality, breaking ties lexicographically in case there are still multiple solutions.

\item
  $\abstr(V)([\state],p) = \left\{
                        \begin{array}{ll}
                        \top & \text{if } V(\state',p)=\top\text{ for all }s'\in[\state] \\
                        \bot & \text{if } V(\state',p)=\bot\text{ for all }s'\in[\state] \\
                        \uval & \text{otherwise.}
                        \end{array}
                        \right.$
\end{itemize}
\end{definition}

Note that $\abstr(\GName)$ can be computed in polynomial time w.r.t. the size of $\GName$, assuming the above heuristics for $\abstr^{\must}(\Act)$.

\WJnew{We now prove that the abstraction preserves classical truth values.
Given a strategy $f$ in $\GName$, we define the set of corresponding \may-strategies in $\abstr(\GName)$ by $abstr^\may(f) = \set{f^{\dagger} \mid f^{\dagger}([s_0],\ldots, [s_n]) = f(s_0',\ldots, s_n')\text{ for some }s'_0 \in [s_0], \dots, s'_n \in [s_n]}$.
Moreover, $abstr^\must(f) = abstr^\may(f) \cap \StrSet^{\must}$. 
Note that $abstr^\may(f)$ is always nonempty. Also, $abstr^\must(f)$ is either empty or a singleton.

Conversely, given a (\may or \must) strategy $f$ in $\abstr(\GName)$, we define the set of corresponding concrete strategies in $\GName$ by 
$concr(f) = \set{ f^* \mid f^*(s_0,\ldots, s_n) = f([s_0],\ldots, [s_n]) }$.
Notice that $concr(f)$ is always a singleton for \must strategies, and either empty or a singleton for \may strategies.
We lift $abstr^\may, abstr^\must, concr$ to sets of strategies in the standard way.
Clearly, $f \in concr(abstr^\may(f))$ for any concrete strategy, and $f \in abstr^\must(concr(f))$ for any \must-strategy.
We lift the notation to assignments analogously.
Observe that, in every $\asgFun^* \in concr(\asgFun[\xElm \mapsto \strFun])$, 
$\xElm$ is assigned with $\strFun^* \in concr(\strFun)$.
}
\begin{theorem}[Preservation] \label{theor_pres}
Let $\G$ be a CGS and $\abstr(\GName)$ its abstraction induced by equivalence relation $\approx$.
Then, for every formula $\phi$ in SL, \WJnew{every (\may or \must) assignment $\chi$} and state $s$ in $\abstr(\GName)$, \WJnew{every assignment $\chi^*\in concr(\chi)$}, and state $t \in s$ in $\G$, it holds that:
	\begin{eqnarray}
	((\abstr(\GName), \chi, \state) \models^3 \phi ) = \top
		& \Rightarrow & (\G,  \chi^*, t) \models^2 \phi \label{eq3} \\
	((\abstr(\GName),  \chi, \state) \models^3 \phi ) = \bot
		& \Rightarrow & (\G,  \chi^*, t) \not \models^2 \phi  \label{eq4}
	\end{eqnarray}
\end{theorem}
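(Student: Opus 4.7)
My plan is to prove the result by structural induction on $\phi$, establishing both implications (\ref{eq3}) and (\ref{eq4}) simultaneously for all admissible $\chi, s, \chi^*, t$. The base case for atoms $p$ and their negations follows immediately from the definition of $\abstr(V)$: if $\abstr(V)(s, p) = \top$ then $V(t', p) = \top$ for every $t' \in s$, hence in particular for $t$; the case for $\bot$ is symmetric. Boolean connectives should be routine thanks to \L{}ukasiewicz semantics, and the binding case $\phi = (a, x) \varphi$ reduces to the inductive hypothesis with the updated assignment, observing that $\chi^*[a \mapsto \chi^*(x)] \in concr(\chi[a \mapsto \chi(x)])$.

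For the strategy quantifier $\phi = \EExs{x} \varphi$, I would handle the $\top$ direction by taking a witnessing $\must$-strategy $f$ in the abstract and letting $f^*$ be the unique element of $concr(f)$; then $\chi^*[x \mapsto f^*] \in concr(\chi[x \mapsto f])$, and the inductive hypothesis applied to $\varphi$ concludes. The $\bot$ direction is dual: given that every $\may$-strategy falsifies $\varphi$ in the abstract, for any concrete strategy $f^*$ I pick some $f \in abstr^{\may}(f^*)$ (which is nonempty), observe $f^* \in concr(f)$, and invoke the inductive hypothesis. The cases for $\AAll{x}$ mirror these, exploiting the same $\may$/$\must$ asymmetry.

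The main work lies in the temporal cases ($\X$, $\U$, $\R$), and I would first prove an enabling lemma: given $\chi^* \in concr(\chi)$ and $t \in s$, the unique concrete outcome $\pi^* = \playFun(\chi^*, t)$ in $G$ lifts to the abstract path $\pi^\dagger$ defined by $\pi^\dagger_i = [\pi^*_i]$, and this $\pi^\dagger$ always belongs to $\plays^{\may}(\chi, s)$; moreover, the global translations align in the sense that $\chi^*_{\pi^*_{\leq i}} \in concr(\chi_{\pi^\dagger_{\leq i}})$ for every $i$. With this lemma in hand, the $\top$ cases follow by instantiating the universal quantification over $\plays^{\may}$ at $\pi^\dagger$ and then applying the inductive hypothesis at the relevant indices. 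For the $\bot$ cases, the crucial point is that determinism of concrete transitions forces $\trans^{\must}([s], \vec{\alpha})$ to be empty or a singleton, so $\plays^{\must}(\chi, s)$ is either empty or equals $\{\pi^\dagger\}$; any witness $\must$-path in the hypothesis must therefore coincide with $\pi^\dagger$, after which the inductive hypothesis concludes. The step I expect to be the main obstacle is verifying the global-translation correspondence through the whole temporal unfolding, particularly for $\U$ and $\R$, where I must simultaneously align the abstract and concrete shifted assignments across every index along the paths, while keeping track of $\may$ vs.\ $\must$ over- and under-approximations that differ for the $\top$ and $\bot$ clauses.
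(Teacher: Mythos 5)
Your proposal follows essentially the same route as the paper's proof: structural induction with the same case split, the same use of the (singleton) concretization of a witnessing \must-strategy in the $\top$-direction of $\EExs{x}$ and of $abstr^{\may}$ applied to an arbitrary concrete strategy in the $\bot$-direction, and the same over-/under-approximation argument for the temporal operators. Your explicit enabling lemma --- that the unique concrete play lifts to a \may-path with aligned global translations, and that determinism of the concrete transitions forces $\trans^{\must}$ to be empty or a singleton so that any witnessing \must-path must coincide with that lift --- simply makes precise what the paper asserts informally when it says that \may-paths overapproximate and \must-paths underapproximate the concrete plays.
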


\begin{proof}
The proof is by induction on the structure of $\phi$.

\smallskip\noindent
\underline{Induction base ($\phi = p$):} 
$((\abstr(\GName),  \chi, \state) \models^3 \phi ) = \top$ iff $\abstr(V)(s, p) = \top$, iff for all $t \in s$, $V(t, p) = \top$, that is, $(\G,  \chi^*, t) \models^2 \phi$. The case for $\bot$ is proved similarly. \new{The case of \underline{$\phi=\lnot p$} is analogous.}

\smallskip\noindent
\underline{Case $\phi = \psi_1 \land \psi_2$:} 
$((\abstr(\GName),  \chi, \state) \models^3 \phi ) = \top$ iff
$((\abstr(\GName),  \chi, \state) \models^3 \psi_1 ) = \top$ and $((\abstr(\GName),  \chi, \state) \models^3 \psi_2 ) = \top$. 
By induction, for all \WJnew{$\chi^*\in concr(\chi)$ and} $t \in s$, $(\G,  \chi^*, t) \models^2 \psi_1$ and $(\G,  \chi^*, t) \models^2 \psi_2$. Thus, $(\G,  \chi^*, t) \models^2 \psi_1 \land \psi_2$. 

Further, $((\abstr(\GName),  \chi, \state) \models^3 \phi ) = \bot$ iff
$((\abstr(\GName),  \chi, \state) \models^3 \psi_1 ) = \bot$ or $((\abstr(\GName),  \chi, \state) \models^3 \psi_2 ) = \bot$.
By induction, for all \WJnew{$\chi^*\in concr(\chi)$ and} $t \in s$, $(\G,  \chi^*, t) \not \models^2 \psi_1$ or for all \WJnew{$\chi^*\in concr(\chi)$ and} $t \in s$, $(\G,  \chi^*, t) \not \models^2 \psi_2$. 
Thus, for all \WJnew{$\chi^*\in concr(\chi)$ and} $t \in s$, $(\G,  \chi^*, t) \not \models^2 \psi_1 \land \psi_2$. 
The case of \underline{$\phi = \psi_1 \lor \psi_2$} is analogous.

\smallskip\noindent
\underline{Case $\phi = \EExs{\xElm} \psi$:} 
$(\abstr(\GName), \asgFun, \sElm \models^3 \phi) = \top$ iff for some 
$\must$-strategy $\strFun \in \StrSet^{\must}(\sElm)$,
$(\abstr(\GName), \asgFun[\xElm \mapsto \strFun], \sElm \models^3 \psi) =\top$.
By induction, \WJnew{for all $\asgFun^*\in concr(\asgFun[\xElm \mapsto \strFun])$ and $t \in s$, it holds that $(\GName, \asgFun^*, t) \models^2 \psi$. 
Assume that $concr(\asgFun[\xElm \mapsto \strFun])$ is nonempty, and consider the sole concrete strategy $\strFun^* \in concr(\strFun)$. 
Clearly, $\asgFun^* = \asgFun^*[\xElm \mapsto \strFun^*]$ for every $\asgFun^*\in concr(\asgFun[\xElm \mapsto \strFun])$. Thus, $(\GName, \asgFun^*[\xElm \mapsto \strFun^*], t) \models^2 \psi$, and hence also $(\GName, \asgFun^*, t) \models^2 \EExs{\xElm}\psi$.
Assume now, to the contrary, 
that $concr(\asgFun[\xElm \mapsto \strFun])$ is empty. 
In that case, $(\GName, \asgFun^*[\xElm \mapsto \strFun^*], t) \models^2 \psi$ holds vacuously for all $\asgFun^* = \asgFun^*[\xElm \mapsto \strFun^*]$, and hence again $(\GName, \asgFun^*, t) \models^2 \EExs{\xElm}\psi$.}

Further, $(\abstr(\GName), \asgFun, \sElm \models^3 \phi) = \bot$ iff for every 
$\may$-strategy $\strFun \in \StrSet^{\may}(\sElm)$,
$(\abstr(\GName), \asgFun[\xElm \mapsto \strFun], \sElm \models^3 \psi) =\bot$.
Take any concrete strategy $g$ in $\GName$, and 
\WJnew{consider any $g^{\dagger} \in abstr^\may(\strFun)$. 
By the above, $(\abstr(\GName), \asgFun[\xElm \mapsto g^{\dagger}], \sElm \models^3 \psi) =\bot$. 
Thus, by induction, $(\GName, \asgFun^*, t) \not \models^2 \psi$ for all $\asgFun^*\in concr(\asgFun[\xElm \mapsto g^{\dagger}])$ and $t \in s$. 
Similarly to the previous paragraph, either (i) $concr(\asgFun[\xElm \mapsto g^{\dagger}])$ is nonempty and $\asgFun^*[\xElm \mapsto g] \in \asgFun^*$, thus $(\GName, \asgFun^*[\xElm \mapsto g], t) \not \models^2 \psi$ for all such $\asgFun^*$, or (ii) the same statement holds vacuously. In both cases, $(\GName, \asgFun^*, t) \not \models^2 \EExs{\xElm} \psi$.}

\smallskip
The cases \underline{$\phi = \forall{\xElm} \psi$} \WJnew{and \underline{$\phi = (\aElm, \xElm) \psi$}} are analogous.

\smallskip\noindent
\WJnew{\underline{Case $\phi = \X \psi$:} 
$(\abstr(\GName), \asgFun, \sElm \models^3 \phi) = \top$ iff 
for all $\playElm \in \plays^{\may}(\asgFun,\sElm)$, 
we have $(\GName, (\asgFun, \playElm)^{1}) \models^3 \psi) = \top$.
By induction, $(\GName, \asgFun^*,t) \models^2 \psi$ for every $\playElm \in \plays^{\may}(\asgFun,\sElm)$, $\asgFun^*\in concr(\asgFun_{\pi_{\le 1}})$ and $t \in (\pi)_{1}$.
Take any state $t'\in s$ and assignment $\asgFun'$ in $\GName$ such that 
$\asgFun'_{\pi^*_{\le 1}} = \asgFun^*$ for some $\playElm^* \in \plays(\asgFun',t')$.
Since \may paths in $\abstr(\GName)$ overapproximate paths in $\GName$, we get that $(\GName, \asgFun',t') \models^2 \X \psi$.

Further, $(\abstr(\GName), \asgFun, \sElm \models^3 \phi) = \bot$ iff 
for some $\playElm \in \plays^{\must}(\asgFun,\sElm)$, we have $(\GName, (\asgFun, \playElm)^{1} \models^3 \varphi) = \bot$.
By induction, there is $\playElm \in \plays^{\must}(\asgFun,\sElm)$ such that $(\GName, \asgFun^*,t) \not\models^2 \psi$ for every $\asgFun^*\in concr(\asgFun_{\pi_{\le 1}})$ and $t \in (\pi)_{1}$.
Take any state $t'\in s$ and assignment $\asgFun'$ in $\GName$. 
Since \must paths in $\abstr(\GName)$ underapproximate paths in $\GName$,  
there must be a path $\playElm^* \in \plays(\asgFun',t')$ such that $\asgFun'_{\pi^*_{\le 1}} = \asgFun^*$.
Thus, $(\GName, \asgFun',t') \not\models^2 \X \psi$.}

\smallskip
The cases \underline{$\phi = \psi_1 \U \psi_2$} and \underline{$\phi = \psi_1 \R \psi_2$} are analogous.
\end{proof}

\WJnew{
\begin{corollary}
For any CGS $\G$ and SL formula $\phi$: 
	\begin{eqnarray*}
	(\abstr(\GName) \models^3 \phi ) = \top
		& \Rightarrow & \G \models^2 \phi \\
(\abstr(\GName) \models^3 \phi ) = \bot
		& \Rightarrow & \G \not \models^2 \phi  
\end{eqnarray*}

\end{corollary}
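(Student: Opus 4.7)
The plan is to derive the corollary as an immediate specialization of the Preservation Theorem~\ref{theor_pres} to initial states and sentences. The notation $\G \models^2 \phi$ and $\abstr(\GName) \models^3 \phi$ is by convention reserved for \emph{sentences} (as made explicit at the end of Section~\ref{sec:SL} and in the paragraph just after Definition~\ref{def:sl3(semantics)}), and abbreviates $(\G, s_0) \models^2 \phi$ and $(\abstr(\GName), \abstr(s_0) \models^3 \phi)$ respectively. Since $\abstr(s_0) = [s_0]$ by Definition~\ref{def:abstraction}, the concrete initial state $s_0$ belongs to the abstract initial state, so we are exactly in the situation covered by Theorem~\ref{theor_pres} with $s = [s_0]$ and $t = s_0$.

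I would first observe that, because $\phi$ is a sentence, $\free{\phi} = \emptyset$, hence the satisfaction value of $\phi$ in either semantics does not depend on the chosen assignment. Concretely, for any assignment $\chi$ on $\abstr(\GName)$ and any $\chi^* \in concr(\chi)$ one has $((\abstr(\GName), \chi, [s_0]) \models^3 \phi) = ((\abstr(\GName), [s_0]) \models^3 \phi)$ and $(\G, \chi^*, s_0) \models^2 \phi \Leftrightarrow (\G, s_0) \models^2 \phi$. Picking, say, the empty assignment $\chi = \emptyset$ (which trivially lies in $concr(\emptyset)$ as well), both sides of Theorem~\ref{theor_pres} reduce to the sentence-only forms, and implications \eqref{eq3}--\eqref{eq4} of Theorem~\ref{theor_pres} immediately yield the two implications of the corollary.

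There is really no obstacle beyond justifying the two bookkeeping points above: that $s_0 \in [s_0]$ and that sentences are assignment-independent in both $\models^2$ and $\models^3$. The first is a matter of unpacking Definition~\ref{def:abstraction}; the second follows by a trivial induction on the structure of $\phi$, exploiting the fact that the clauses for quantifiers and bindings update the assignment on bound symbols only. Once these are in place, the corollary is a one-line application of Theorem~\ref{theor_pres}.
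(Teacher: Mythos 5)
Your proposal is correct and matches the paper's (implicit) argument: the corollary is stated as an immediate specialization of Theorem~\ref{theor_pres} to $s=[s_0]$, $t=s_0$, using that $\abstr(s_0)=[s_0]$ and that sentences are assignment-independent under both $\models^2$ and $\models^3$. The paper gives no separate proof, treating exactly these bookkeeping observations as obvious, so your write-up is if anything more careful than the original.
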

}

\WJnew{It is easy to see that the above results hold also for the semantic variant of \SL based on memoryless strategies.}

\begin{table*}[t]
\scriptsize
\centering
\begin{tabular}{|r|rrrr|rrrrrr|}
\hline
\multicolumn{1}{|c|}{\multirow{2}{*}{Processes}} & \multicolumn{4}{c|}{CGS}                                                                                                                                                                                                                            & \multicolumn{6}{c|}{3-valued CGS}                                                                                                                                                                                                                                                                                                                                                                                                                                                                                          \\ \cline{2-11} 
\multicolumn{1}{|c|}{}                           & \multicolumn{1}{c|}{States} & \multicolumn{1}{c|}{Transitions} & \multicolumn{1}{c|}{\begin{tabular}[c]{@{}c@{}}Verification\\Time {[}sec{]} \\ (SL{[}1G{]})\end{tabular}} & \multicolumn{1}{c|}{\begin{tabular}[c]{@{}c@{}}Verification\\ Time {[}sec{]} \\ (SLK)\end{tabular}} & \multicolumn{1}{c|}{States} & \multicolumn{1}{c|}{\begin{tabular}[c]{@{}c@{}}Transitions \\ {[}Must{]}\end{tabular}} & \multicolumn{1}{c|}{\begin{tabular}[c]{@{}c@{}}Transitions \\ {[}May{]}\end{tabular}} & \multicolumn{1}{c|}{\begin{tabular}[c]{@{}c@{}}Abstraction\\ Time {[}sec{]}\end{tabular}} & \multicolumn{1}{c|}{\begin{tabular}[c]{@{}c@{}}Verification\\ Time {[}sec{]} \\ (SL{[}1G{]})\end{tabular}} & \multicolumn{1}{c|}{\begin{tabular}[c]{@{}c@{}}Verification\\ Time {[}sec{]}\\ (SLK)\end{tabular}} \\ \hline
2                                                & \multicolumn{1}{r|}{9}      & \multicolumn{1}{r|}{40}          & \multicolumn{1}{r|}{0.48}                                                                   & 0.18                                                                                 & \multicolumn{1}{r|}{5}      & \multicolumn{1}{r|}{16}                                                                & \multicolumn{1}{r|}{18}                                                               & \multicolumn{1}{r|}{0.01}                                                                 & \multicolumn{1}{r|}{0.10}                                                                                  & 0.03                                                                                               \\ \hline
3                                                & \multicolumn{1}{r|}{21}     & \multicolumn{1}{r|}{232}         & \multicolumn{1}{r|}{3725.56}                                                                & 2863.75                                                                              & \multicolumn{1}{r|}{6}      & \multicolumn{1}{r|}{24}                                                                & \multicolumn{1}{r|}{27}                                                               & \multicolumn{1}{r|}{0.03}                                                                 & \multicolumn{1}{r|}{0.12}                                                                                  & 0.13                                                                                               \\ \hline
4                                                & \multicolumn{1}{r|}{49}     & \multicolumn{1}{r|}{1376}        & \multicolumn{1}{r|}{T.O}                                                                    & T.O                                                                                  & \multicolumn{1}{r|}{7}      & \multicolumn{1}{r|}{34}                                                                & \multicolumn{1}{r|}{38}                                                               & \multicolumn{1}{r|}{0.10}                                                                 & \multicolumn{1}{r|}{0.24}                                                                                  & 0.19                                                                                               \\ \hline
5                                                & \multicolumn{1}{r|}{113}    & \multicolumn{1}{r|}{7904}        & \multicolumn{1}{r|}{T.O}                                                                    & T.O                                                                                  & \multicolumn{1}{r|}{8}      & \multicolumn{1}{r|}{46}                                                                & \multicolumn{1}{r|}{51}                                                               & \multicolumn{1}{r|}{0.31}                                                                 & \multicolumn{1}{r|}{0.72}                                                                                  & 0.69                                                                                               \\ \hline
6                                                & \multicolumn{1}{r|}{257}    & \multicolumn{1}{r|}{43520}       & \multicolumn{1}{r|}{T.O}                                                                    & T.O                                                                                  & \multicolumn{1}{r|}{9}      & \multicolumn{1}{r|}{60}                                                                & \multicolumn{1}{r|}{66}                                                               & \multicolumn{1}{r|}{1.24}                                                                 & \multicolumn{1}{r|}{2.08}                                                                                  & 1.12                                                                                               \\ \hline
7                                                & \multicolumn{1}{r|}{577}    & \multicolumn{1}{r|}{230528}      & \multicolumn{1}{r|}{T.O}                                                                    & T.O                                                                                  & \multicolumn{1}{r|}{10}     & \multicolumn{1}{r|}{76}                                                                & \multicolumn{1}{r|}{83}                                                               & \multicolumn{1}{r|}{10.88}                                                                & \multicolumn{1}{r|}{5.66}                                                                                  & 4.99                                                                                               \\ \hline
8                                                & \multicolumn{1}{r|}{1281}   & \multicolumn{1}{r|}{1182208}     & \multicolumn{1}{r|}{T.O}                                                                    & T.O                                                                                  & \multicolumn{1}{r|}{11}     & \multicolumn{1}{r|}{94}                                                                & \multicolumn{1}{r|}{102}                                                              & \multicolumn{1}{r|}{107.89}                                                               & \multicolumn{1}{r|}{8.40}                                                                                  & 6.67                                                                                               \\ \hline
9                                                & \multicolumn{1}{r|}{2817}   & \multicolumn{1}{r|}{5903872}     & \multicolumn{1}{r|}{T.O}                                                                    & T.O                                                                                  & \multicolumn{1}{r|}{12}     & \multicolumn{1}{r|}{114}                                                               & \multicolumn{1}{r|}{123}                                                              & \multicolumn{1}{r|}{1087.14}                                                              & \multicolumn{1}{r|}{29.37}                                                                                 & 26.81                                                                                              \\ \hline
\end{tabular}
\caption{Experimental results for the scheduler case study (T.O. stands for Time Out).}
\label{tab:experiments}

\end{table*} 
\section{Implementation}
\label{sec:implementation}

We implemented a prototype tool in Java\footnote{\url{https://github.com/AngeloFerrando/3-valuedSL}}, which accepts CGSs and SL properties as input,
on top of MCMAS, the \textit{de facto} standard model checker for MAS~\cite{LomuscioQuRaimondi15}. Specifically, our tool exploits MCMAS as a black-box, for performing the actual verification step. In fact, our tool focuses on the abstraction procedure for the verification of SL formulas (as presented in this paper), while their verification is obtained through MCMAS.

From a practical perspective, there are various aspects to report, that can be summarised as
(i) input/output of the tool;
(ii) abstraction of the CGS;
(iii) verification in MCMAS.

(i) The implementation allows for the definition of CGSs as external JSON\footnote{\url{https://www.json.org/}} formatted input files. In this way, any end user may easily interact with the tool, independently from
the CGS's internal representation (\textit{i.e.}, the corresponding data structures).
As CGSs, also the definition of the SL formula to
check is handled as an external parameter to the tool. Once the verification ends,
the outcome is returned to the user.

(ii) As presented in the paper, in order to improve the verification performance, the CGS is first abstracted. The abstraction is obtained by clustering multiple states into a single abstract state of the CGS. This step is based on an equivalence relation ($\approx$), as presented in Definition~\ref{def:abstraction}.
An abstract state may be labeled by atoms.
As presented in Definition~\ref{def:abstraction}, an atom holds (resp.~does not hold) in the abstract state iff it holds (resp.~does not hold) in all the concrete states which have been collapsed into the abstract state. Otherwise, the atom is considered undefined.
Note that, since atoms can hold, not hold, or being undefined in a state, they are explicitly labeled in each state. In practice, this is obtained by duplicating each atom $p$ into atoms $p_\top$ and $p_\bot$,
which correspond to $p$ holding or not holding in a certain state of the abstract CGS; whereas being undefined can be marked by having neither $p_\top$ nor $p_\bot$ present in the abstract state.

(iii) The abstract CGS is then verified in MCMAS against an SL formula. In more detail, our tool exploits the MCMAS extension for SL[1G], i.e., the \textit{one goal} fragment \cite{CermakLomuscioMurano15}, and the MCMAS extension for SLK, i.e., an epistemic extension of SL, \cite{Cermak+14}.

Note that, to make use of the MCMAS model checker, our CGSs need to be first translated into Interpreted Systems~\cite{FHMV95}. In fact, MCMAS does not support CGSs, and it expects Interpreted Systems expressed using a domain specific language called Interpreted Systems Programming Language (ISPL).
Thus, a pre-processing step before calling MCMAS is always required, where the CGS of interest is first automatically translated into its ISPL representation. This is only a technical detail, since CGSs and Interpreted Systems are equally expressive \cite{BLMR20,Goranko+04a}.

It is important to report that the ISPL generation is performed on standard CGSs, not on their abstraction. Indeed, the abstract CGSs as described in Definition~\ref{def:abstraction} cannot be used in MCMAS straight away, but need to be reprocessed first.
To generate a CGS which can then be verified into MCMAS, the tool splits the 3-valued CGS into two CGSs. Such a split is determined by the SL formula under evaluation;
\nmnew{that is, given an SL formula $\varphi$, we extract two sets of agents, $E$ and $U$, whose strategies are only existentially and universally quantified in $\varphi$, respectively.}
By using these two sets, we split the 3-valued CGS into two CGSs:
one CGS where agents in $E$ use $must$-strategies, while agents in $U$ use $may$-strategies;
one CGS where agents in $E$ use $may$-strategies, while agents in $U$ use $must$-strategies.
The first CGS can be used to prove the satisfaction of $\varphi$, while the second CGS can be used to prove the violation of $\varphi$. This follows from Definition~\ref{def:sl(semantics)}, third and fourth bullet points.

As a consequence of how the verification is performed in practice, we remark an important difference between the theory presented in this paper and its implementation:
the implementation handles SL formulas with arbitrary alternation of universal ($\forall x$) and existential ($\exists x$) quantifiers, as long as for each agent ($a$) in the formula, there is one single binding $(a, x)$.
Even though at the theoretical level our abstraction method can handle all SL formulas, at the implementation level this is not the case.
In fact, our tool is based on MCMAS, and because of that, we cannot handle formulas where the agents need to swap between universally and existentially quantified strategies.
This would require to modify MCMAS internally, which we leave as future work.

\begin{figure}
    \centering
    \includegraphics[width=.63\linewidth]{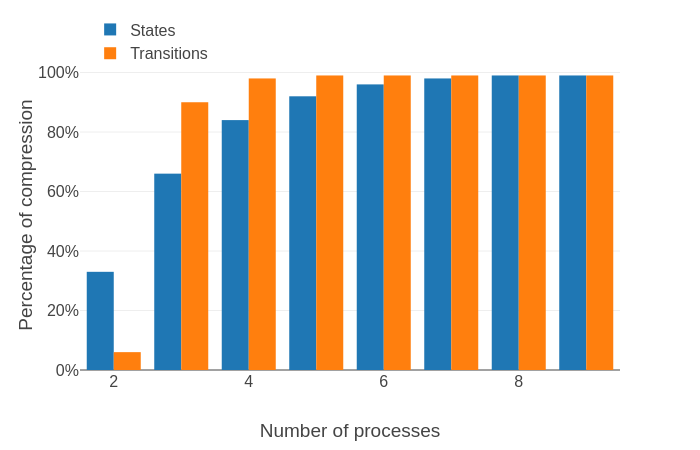}
    \caption{CGS compression in the scheduler case study.}
\label{fig:compression}
\end{figure}

\section{Experiments}
\label{sec:experiments}

We carried out the experiments on a machine with the following specifications: Intel(R) Core(TM) i7-7700HQ CPU @ 2.80GHz, 4 cores 8 threads, 16 GB RAM DDR4.
The case study we experimented on consists in a scheduler, where $N$ agents, \textit{i.e.}, processes (called $P_i$ for $1 \leq i \leq N$) compete to get CPU time, while an $Arbiter$ agent decides which process to grant access (one at a time).
The full description of the example can be found in~\cite{DBLP:journals/iandc/CermakLMM18}.
The corresponding CGS can be parameterised over the number $N$ of processes. Naturally, such parameter largely influences the size and complexity of the resulting CGS.
Table~\ref{tab:experiments} reports experimental results we obtained by applying our tool to the scheduler case study. We considered the verification of the same SL formula $\varphi$ verified in~\cite{DBLP:journals/iandc/CermakLMM18}, that is:
{\footnotesize
\begin{center}
$\varphi = \forall x, \vec{y} (Arbiter,x) (P_1,y_1) \ldots (P_n,y_n) \mbox{}$\\
$\qquad G \lnot \bigvee_{i=1}^{n}\bigvee_{j=i+1}^{n} rs_i \land rs_j$
\end{center}
}
Intuitively,
$\varphi$ asserts that at most one
process ($P_i$) owns the resource ($rs$) at any given point in time.
In Table~\ref{tab:experiments}, each row refers to
a fixed number of processes, from 2 to 9, used to generate the corresponding CGS.
Each row also
reports the number of states and transitions of the CGS, and the time required to perform its verification in MCMAS,
both on the original CGS and its 3-valued abstraction,
for comparison.
For the latter, the time required to generate such an abstraction is reported.
For the experiments with the scheduler, the abstraction is assumed to be guided by an expert of the system. In more detail, all states where at least one process is waiting to be selected by the scheduler are clustered together.
This choice, as apparent in Table~\ref{tab:experiments}, largely reduces the number of states and transitions of the CGS. Nonetheless, this does not prevent the verification process to correctly conclude the satisfaction of $\varphi$ on both the CGS and its 3-valued version, \textit{i.e.}, the abstraction does not remove any information necessary to determine the satisfaction of $\varphi$.
Table~\ref{tab:experiments} also reports the execution time required for the actual verification of both the CGS and its 3-valued abstraction. As we can observe, without the abstraction step, the verification of the CGS times out when 3 processes are considered. In fact, MCMAS cannot model check $\varphi$ in less than 3 hours, which was set as the time out (both for the SL[1G] and SLK extensions of MCMAS). Instead, thanks to the abstraction, the verification can be performed for up to 9 (a more realistic number of processes). Note that, the verification of the 3-valued CGS could have been performed for even larger numbers of processes. However, the CGS with 10 processes did not fit into the available memory of the machine used for the experiments; so, it was not possible to apply our technique to generate its 3-valued abstraction. Nonetheless, we expect the tool to handle even the case with $10$ processes via abstraction.
Figure~\ref{fig:compression} reports the data compression obtained in the scheduler case study. It is immediate to observe the huge compression obtained via abstraction. Indeed, the larger is the number of processes involved, the more significant is such compression. Note that, for more than 6 processes, the abstra\-ction produces a CGS with $\sim\!99$\% less states and transitions.
Besides $\varphi$, we experimented with other specifications as well. Specifically, we carried out experiments over a large set of randomly-generated SL formulas.
The goal of these experiments is to understand how many times our tool would return a conclusive answer (\textit{i.e.}, not $\uval$). We automatically synthesised 10,000 different SL formulas and verified them in the scheduler case study; where we kept the same abstraction as for Table~\ref{tab:experiments}. Over the 10,000 different SL formulas, the tool was capable of providing a defined truth value (either true or false) in the 83\% of cases. Of course, this is a preliminary evaluation, which needs to be corroborated through additional experiments, also involving further real-world scenarios. Nonetheless, the results we obtained are promising, and allow us to empirically show the effectiveness of our approach, not only from a data-compression perspective, but also from a computational one.

\section{Conclusion}
\label{sec:conclusion-future}

The high complexity of the verification problem for Strategy Logic hinders the development of practical model checking tools and therefore its application in critical, real-life
scenarios.  As a consequence, it is of upmost importance to develop
techniques to alleviate this computational burden and allow the use of
Strategy Logic in concrete use cases, such as the scheduler scenario
here analysed.  This contribution is meant to be the first step in this
direction.

\newpage
\section*{Acknowledgements}
W. Jamroga acknowledges the support of NCBR Poland, NCN Poland, and FNR Luxembourg under projects STV (POLLUX-VII/1/2019), SpaceVote (POLLUX-XI/14/SpaceVote/2023), and SAI (2020/02/Y/ST6/00064).
A. Murano acknowledges
the support of the PNNR FAIR project, the InDAM project “Strategic Reasoning in Mechanism Design”, and the PRIN 2020 Project RIPER.

\bibliographystyle{named}
\newcommand{\hoek}[1]{}

\end{document}